\pgfplotsset{compat=1.3}
\newcommand{\vect}[1]{\boldsymbol{\mathrm{#1}}}
\newcommand{\mat}[1]{\boldsymbol{\mathrm{#1}}}
\newcommand{\tr}{\mathrm{tr}}
\newcommand{\diag}{\text{diag}}
\theoremstyle{definition}
\newtheorem{theorem}{Theorem}
\newtheorem{proposition}{Proposition}
\newtheorem{corollary}{Corollary}
\begin{document}

\linespread{1}
\title{Optimal Downlink Training Sequence \\for Massive MIMO Secret-Key Generation\vspace{-1em}}

\author{\IEEEauthorblockN{Fran\c{c}ois Rottenberg\IEEEauthorrefmark{1}\IEEEauthorrefmark{2}
	}
	
\IEEEauthorblockA{\IEEEauthorrefmark{1}Universit\'{e} catholique de Louvain, Louvain-la-Neuve, Belgium,
	}

\IEEEauthorblockA{\IEEEauthorrefmark{2}Universit\'{e} libre de Bruxelles, Brussels, Belgium.}\vspace{-1.5em}
}


\maketitle

\begin{abstract}
	In this paper, the secret-key capacity is maximized by optimizing the downlink training sequence in a time division duplexing (TDD) massive multiple-input-multiple-output (MIMO) scenario. Both single-user and multiple user cases are considered. As opposed to previous works, the optimal training sequence and the related secret-key capacity is characterized in closed-form in the single-user case and the large antenna multiple-user case. Designs taking into account a constraint on the maximal number of pilots are also proposed. In the multiple-user case, both the max-min and the sum capacity criteria are considered, including potential user priorities. In the end, it is shown that massive MIMO boosts the secret-key capacity by leveraging: i) spatial dimensionality gain and ii) array gain. Moreover, in the large antenna case, the multiple-user capacity is obtained with no extra pilot overhead as compared to the single-user case.
\end{abstract}

\begin{IEEEkeywords}
Secret-key generation, Massive MIMO, training design.
\end{IEEEkeywords}

\section{Introduction}\label{section:Introduction}
\linespread{1}

Secret-key generation based on wireless channel reciprocity is an interesting alternative to cryptographic primitives as it can be efficiently implemented at the physical layer of emerging wireless communication networks, while providing information-theoretic security guarantees \cite{Csiszar1993,Maurer1993}. In particular, secret-key generation is particularly promising in massive multiple-input-multiple-output (MIMO) scenarios, which are being heavily deployed in 5G \cite{Jiao2019review,Li2019review}. Massive MIMO allows to boost the secret-key capacity, \textit{i.e.}, the maximal rate at which secret bits can be generated. This is obtained by leveraging spatial dimensionality gains together with array gains.

Different works have already considered physical key generation for massive MIMO systems. For instance, the work of \cite{Im2015} studied the impact of pilot contamination attack. The authors of \cite{Jiao2018perturbationangle} suggested to use perturbations of angle to increase the secret-key rate. In \cite{Jiao2018}, the authors looked at secret-key generation in millimeter-wave massive MIMO. Training design was also considered in \cite{Sun2020IEEEAccess,Chen2020,Li2021}. 

In contrast to previous approaches in the domain, this work provides a detailed study of the DL (downlink) training sequence design to optimize the massive MIMO secret-key capacity in time divison duplexing (TDD). In contrast to previous approaches, closed-form solutions are provided in both the single-user case and the multiple-antenna case. It is shown how the large number of antennas at the base (BS) allows to boost the capacity, while reducing the pilot overhead. Indeed, one of the key findings is that the required number of pilots to achieve the capacity does not scale with the number of users as long as the number of antennas at the BS is large enough. This implies that the pilot overhead remains limited, even in multiple-user scenarios. Designs that take into account a constraint on the maximal number of pilots to be sent are also proposed. Finally, in the multiple-case, different optimization criteria are considered such as the maximization of the sum capacity and the max-min capacity, including potential user priorities.

\textbf{Notations}: 
Vectors and matrices are denoted by bold lowercase and uppercase letters $\vect{a}$ and $\mat{A}$, respectively (resp.). Superscripts $^*$, $^T$, $^H$ and $^{\dagger}$ stand for conjugate, transpose, Hermitian transpose and Moore-Penrose pseudo-inverse. The symbols $\tr[.]$, $\mathbb{E}(.)$, $\Im(.)$ and $\Re(.)$ denote the trace, expectation, imaginary and real parts, respectively. $\jmath$ is the imaginary unit. $\|\mat{A}\|$ and $|\mat{A}|$ are the Frobenius norm and determinant respectively. $\mat{I}_N$ denotes the identity matrix of order $N$. $\mat{0}_{N\times M}$ is a zero matrix of size $N\times M$. Subscripts of matrices are dropped whenever matrix dimensions are clear from the context. 
$\diag(\vect{a})$ returns a diagonal matrix with $\vect{a}$ on its diagonal. The positive part of a real quantity is denoted by $[a]^+=\max(a,0)$. $\sigma_n(\mat{A})$ (or $\lambda_n(\mat{A})$) is the $n$-th largest singular (or eigenvalue) of $\mat{A}$. $\otimes$ stands for the Kronecker product. $\mat{A}^{1/2}$ denotes the square root of $\mat{A}$, uniquely defined for positive semidefinite matrix $\mat{A}$.

\section{System Model and Secret-Key Capacity}
\label{section:system_model}

\begin{figure}[!t]  
	\centering
	
	\resizebox{0.8\textwidth}{!}{%
		{\includegraphics[clip, trim=0cm 12.5cm 8cm 0cm, scale=1]{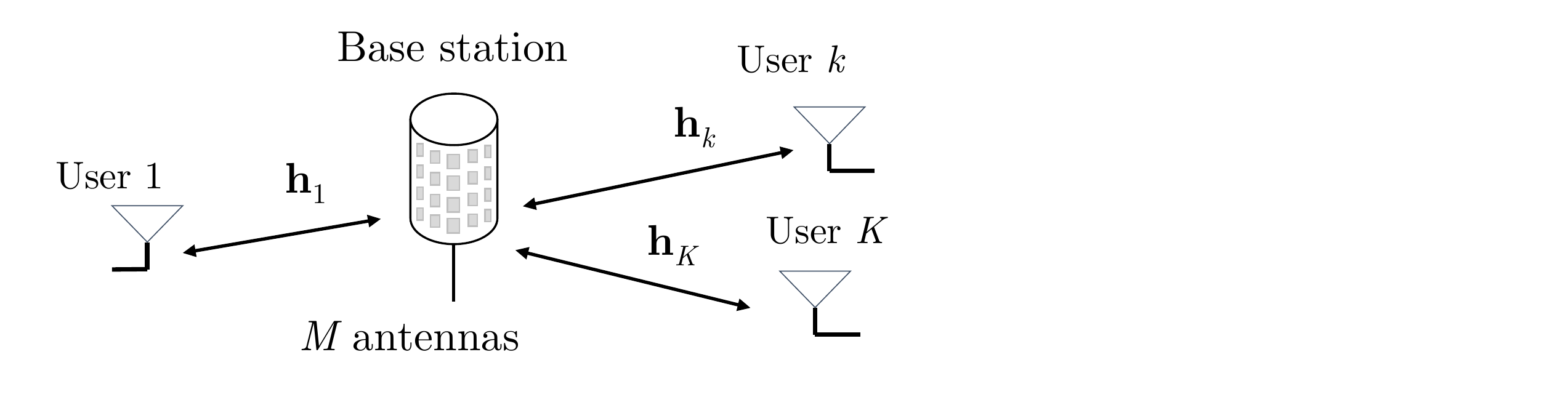}} 
	}
		\vspace{-1em}
	\caption{Massive MIMO channel model.}
	\label{fig:channel_model}
	\vspace{-1.5em}
\end{figure}

A massive MIMO scenario is considered with $K$ single-antenna users and a BS equipped with $M$ antennas, as shown in Fig.~\ref{fig:channel_model}. The channel is assumed to be frequency flat, which is typically the case for a given subchannel of an orthogonal frequency division multiplexing (OFDM) system. The channel vector from user $k$ to the BS is denoted by $\vect{h}_k\in \mathbb{C}^{M\times 1}$. The channels from each user are assumed to be independent and can be modeled as a zero mean circularly symmetric Gaussian (ZMCSG) vector with covariance matrix $\mat{R}_k \in \mathbb{C}^{M\times M}$. The BS has knowledge of the covariance matrices $\mat{R}_k$, which can be easily estimated in practice if the channel is stationary, averaging over long term statistics of $\vect{h}_k$. Hence, it is not a source of randomness used to generate a secret key. The following \textit{compact}\footnote{By compact, it is implied that only non zero eigenvalues and related eigenvectors are kept.} eigenvalue decomposition is defined $\mat{R}_k=\mat{Q}_k\mat{\Lambda}_k\mat{Q}_k^\dagger$, where $\mat{\Lambda}_k=\diag(\lambda_{1,k},...,\lambda_{S_k,k}) \in \mathbb{C}^{S_k \times S_k}$ is a diagonal matrix with strictly positive diagonal elements and $\mat{Q}_k \in \mathbb{C}^{M \times S_k}$ contains eigenvectors corresponding to the $S_k$ non zero spatial modes. Using previous notations, the channel $\vect{h}_k$ can be equivalently written as
\begin{align}
\vect{h}_k=\mat{Q}_k\mat{\Lambda}_k^{1/2}\vect{g}_k \label{eq:equivalent_h_g},
\end{align}
where $\vect{g}_k\in \mathbb{C}^{S_k\times 1}$ is a ZMCSG with zero mean and an identity covariance matrix.

A conventional TDD system is considered so that channel reciprocity holds between the uplink (UL) and DL channels. In the UL, the pilot overhead remains limited as it scales with the number of antennas at the transmit side $K$. As common in massive MIMO networks, users transmit orthogonal UL pilots\footnote{For instance, by transmitting $K$ pilots at $K$ successive OFDM symbols.}, so that the BS can get a channel estimate for each user as
\begin{align}
\vect{z}_{k,\mathrm{UL}} &= p_{k,\mathrm{UL}}^{1/2} \vect{h}_k + \vect{w}_{k,\mathrm{UL}} \label{eq:initial_MU}\\
&=p_{k,\mathrm{UL}}^{1/2} \mat{Q}_k\mat{\Lambda}_k^{1/2}\vect{g}_k + \vect{w}_{k,\mathrm{UL}},\nonumber
\end{align}
where $p_{k,\mathrm{UL}}$ is the pilot power of user $k$, $\vect{w}_{k,\mathrm{UL}}$ is ZMCSG additive noise of covariance $\sigma_{\mathrm{UL}}^2 \mat{I}_M$.

In conventional TDD massive MIMO, the BS can rely on reciprocity to acquire DL channel state information (CSI) from UL pilots to perform, \textit{e.g.}, beamforming. Hence, no additional DL pilots are necessary\footnote{In practice, only user-specific reference signals are transmitted to allow for coherent detection at the user side.}. However, for secret-key generation, the users need to be able to estimate (at least a part of) the CSI as the BS, so as to generate a common key. A massive MIMO network is typically characterized by $K<<M$, implying that the pilot overhead, which scales with the number of transmit antennas, can quickly become prohibitive in DL. In practice, channel dimensionality reduction can be used, leveraging on the spatial statistics $\mat{R}_k$, to decrease the number of required pilots $T$, while keeping a large secret-key capacity. To do this, the BS transmits a training sequence $\mat{S}_{\mathrm{DL}}\in \mathbb{C}^{M\times T}$, where $T$ denotes the number of transmitted pilots, and so that the user $k$ observes
\begin{align*}
\vect{z}_{k,\mathrm{DL}} &= \mat{S}_{\mathrm{DL}}^\dagger \vect{h}_k + \vect{w}_{k,\mathrm{DL}} \\
&= \mat{S}_{\mathrm{DL}}^\dagger \mat{Q}_k\mat{\Lambda}_k^{1/2}\vect{g}_k + \vect{w}_{k,\mathrm{DL}},
\end{align*}
where $\vect{w}_{k,\mathrm{DL}} \in \mathbb{C}^{T\times 1}$ is ZMCSG additive noise of covariance $\sigma_{\mathrm{DL}}^2 \mat{I}_T$. If the environment is rich in scattering and if the eavesdropper is located more than a wavelength away from any users or the BS \cite{Rottenberg2020pimrc}, its observations are uncorrelated. For a given training matrix $\mat{S}_{\mathrm{DL}}$, the secret-key capacity of user $k$ is given by \cite{bloch2011physical}
\begin{align*}
C_k&=I(\vect{z}_{k,\mathrm{UL}};\vect{z}_{k,\mathrm{DL}})=h({\vect{z}}_{k,\mathrm{UL}})-h({\vect{z}}_{k,\mathrm{UL}}|{\vect{z}}_{k,\mathrm{DL}}).
\end{align*}
Using \cite[Lemma 1]{wong2009secret}, $C_k$ becomes
\begin{align*}
C_k&=\log \frac{\left| \mat{C}_{\vect{z}_{k,\mathrm{UL}}}\right|}{\left|\mat{C}_{\vect{z}_{k,\mathrm{UL}}}-\mat{C}_{\vect{z}_{k,\mathrm{UL}}\vect{z}_{k,\mathrm{DL}}}\mat{C}_{\vect{z}_{k,\mathrm{DL}}}^{-1}\mat{C}_{\vect{z}_{k,\mathrm{UL}}\vect{z}_{k,\mathrm{UL}}}^\dagger \right|},
\end{align*}
where $\mat{C}_{\vect{x}}=\mathbb{E}(\vect{x}\vect{x}^\dagger)$ and $\mat{C}_{\vect{x}\vect{y}}=\mathbb{E}(\vect{x}\vect{y}^\dagger)$. After evaluation of the covariance matrices and applying Woodbury matrix inversion lemma, $C_k$ becomes
\begin{align}
C_k&=\log \frac{\left|\mat{\Lambda}_k  + \rho_{k,\mathrm{UL}}^{-1} \mat{I}\right|}{\left| \rho_{k,\mathrm{UL}}^{-1} \mat{I}+ ( \mat{\Lambda}_k^{-1} + \sigma^{-2}_{\mathrm{DL}}  \mat{Q}_k^\dagger \mat{C}_{\mathrm{DL}} \mat{Q}_k)^{-1}  \right|}, \label{eq:C_k}
\end{align}
where $\rho_{k,\mathrm{UL}}=\frac{p_{k,\mathrm{UL}}}{\sigma_{\mathrm{UL}}^2}$ and $\mat{C}_{\mathrm{DL}}=\mat{S}_{\mathrm{DL}}\mat{S}_{\mathrm{DL}}^\dagger$ is a positive semidefinite matrix, which can be seen as the training covariance matrix. In the following sections, the optimization of the training sequence $\mat{S}_{\mathrm{DL}}$, or equivalently $\mat{C}_{\mathrm{DL}}\succcurlyeq \mat{0}$, is considered so as to maximize the secret-key capacity of each user under a training power constraint $\tr[\mat{C}_{\mathrm{DL}}]\leq p_{\mathrm{DL}}$.

Note that the BS and the users sample complex observations of the channel. At the price of a capacity reduction, robustness to potential phase offsets can be gained by sampling instead only the envelope or modulus of the channel observations \cite{Rottenberg2020CSIvsRSS}.

\section{Single-User Case}

In the case of a single-user, the index $k$ is dropped for clarity. The secret-key capacity is given by
\begin{align}
C&=\log \frac{\left|\mat{\Lambda}  + \rho_{\mathrm{UL}}^{-1} \mat{I}\right|}{\left| \rho_{\mathrm{UL}}^{-1} \mat{I}+ ( \mat{\Lambda}^{-1} + \sigma^{-2}_{\mathrm{DL}}  \mat{Q}^\dagger \mat{C}_{\mathrm{DL}} \mat{Q})^{-1}  \right|}, \label{eq:C}
\end{align}
and the solution of the optimization problem
\begin{align}
\max_{\mat{C}_{\mathrm{DL}}\succcurlyeq \mat{0}} \ C \quad \text{s.t. }\tr[\mat{C}_{\mathrm{DL}}]\leq p_{\mathrm{DL}}, \label{eq:max_capa_SU}
\end{align}
is given in next theorem.

\begin{theorem} \label{theorem:proof_theorem_SU}
	The following training sequence is a solution of the single-user problem (\ref{eq:max_capa_SU})
	\begin{align}
		\mat{S}_{\mathrm{DL}}^\dagger &= \mat{P}_{\mathrm{DL}}^{1/2} \mat{Q}^\dagger,\ \mat{P}_{\mathrm{DL}}^{1/2}=\diag\left(p_{1,\mathrm{DL}}^{1/2},...,p_{S,\mathrm{DL}}^{1/2}\right). \label{eq:SU_opt_training}
	\end{align}
	The power associated to each spatial mode $p_{s,\mathrm{DL}}$ is given by the water filling solution $p_{s,\mathrm{DL}}=\left(\mu -\sigma_{\mathrm{DL}}^2/{\lambda_s}\right)^+$, where $\mu$ is a positive constant that ensures that the power constraint is satisfied, $\tr[\mat{C}_{\mathrm{DL}}]= \tr[\mat{P}_{\mathrm{DL}}]= p_{\mathrm{DL}}$. The resulting secret-key capacity is given by
	\begin{align}
	C
	&= \sum_{s=1}^S \log \left(1+ \frac{\lambda_s^2\rho_{\mathrm{UL}}\rho_{s,\mathrm{DL}}}{\lambda_s(\rho_{\mathrm{UL}}+\rho_{s,\mathrm{DL}})+1 }\right)
	, \label{eq:SU_secret_key_capacity}
	\end{align}
	where $\rho_{s,\mathrm{DL}}=\frac{p_{s,\mathrm{DL}}}{\sigma_{\mathrm{DL}}^2}$.
\end{theorem}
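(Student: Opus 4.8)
The plan is to exploit that the numerator $|\mat{\Lambda}+\rho_{\mathrm{UL}}^{-1}\mat{I}|$ in (\ref{eq:C}) does not depend on $\mat{C}_{\mathrm{DL}}$, so maximizing $C$ is equivalent to minimizing the denominator determinant. First I would observe that this determinant depends on $\mat{C}_{\mathrm{DL}}$ only through the $S\times S$ matrix $\mat{A}:=\mat{Q}^\dagger\mat{C}_{\mathrm{DL}}\mat{Q}$. Since $\mat{Q}$ has orthonormal columns, $\mat{Q}\mat{Q}^\dagger$ is an orthogonal projector with $\mat{Q}\mat{Q}^\dagger\preccurlyeq\mat{I}_M$, hence $\tr[\mat{A}]=\tr[\mat{C}_{\mathrm{DL}}\mat{Q}\mat{Q}^\dagger]\le\tr[\mat{C}_{\mathrm{DL}}]\le p_{\mathrm{DL}}$, with equality attainable whenever $\mat{C}_{\mathrm{DL}}$ is supported on $\mathrm{span}(\mat{Q})$. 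This shows it is never suboptimal to place all training power in the signal subspace, i.e. to take $\mat{C}_{\mathrm{DL}}=\mat{Q}\mat{A}\mat{Q}^\dagger$, and reduces (\ref{eq:max_capa_SU}) to minimizing $g(\mat{A}):=\log|\rho_{\mathrm{UL}}^{-1}\mat{I}+\mat{X}^{-1}|$ over $\mat{A}\succcurlyeq\mat{0}$ with $\tr[\mat{A}]\le p_{\mathrm{DL}}$, where $\mat{X}:=\mat{\Lambda}^{-1}+\sigma_{\mathrm{DL}}^{-2}\mat{A}$ is affine in $\mat{A}$.

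The crux, and the step I expect to be the main obstacle, is to prove that the optimal $\mat{A}$ is diagonal in the eigenbasis of $\mat{\Lambda}$; equivalently, that replacing $\mat{A}$ by its diagonal part $\mat{A}_{\mathrm{d}}$ (which has the same trace) cannot increase $g$. My plan is a convexity-plus-symmetrization argument. I would first establish that $g$ is convex in $\mat{A}$; since $\mat{A}\mapsto\mat{X}$ is affine, this reduces to convexity of $\mat{X}\mapsto\log|\rho_{\mathrm{UL}}^{-1}\mat{I}+\mat{X}^{-1}|$ on the positive-definite cone, which holds in the scalar case (its second derivative equals $x^{-2}-(x+\rho_{\mathrm{UL}})^{-2}>0$) and whose matrix version is the technical point to be checked. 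I would then symmetrize: for any diagonal unitary $\mat{U}$ (a phase matrix) one has $\mat{U}\mat{\Lambda}^{-1}\mat{U}^\dagger=\mat{\Lambda}^{-1}$, so $g(\mat{U}\mat{A}\mat{U}^\dagger)=g(\mat{A})$ by unitary invariance of the determinant, while averaging $\mat{U}\mat{A}\mat{U}^\dagger$ over independent uniform phases returns exactly $\mat{A}_{\mathrm{d}}$. Jensen's inequality then yields $g(\mat{A}_{\mathrm{d}})\le g(\mat{A})$ with the trace unchanged, so an optimal $\mat{A}$ may be taken diagonal, $\mat{A}=\mat{P}_{\mathrm{DL}}=\diag(p_{1,\mathrm{DL}},\dots,p_{S,\mathrm{DL}})$. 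This produces the announced training $\mat{S}_{\mathrm{DL}}^\dagger=\mat{P}_{\mathrm{DL}}^{1/2}\mat{Q}^\dagger$ through the factorization $\mat{C}_{\mathrm{DL}}=\mat{Q}\mat{P}_{\mathrm{DL}}\mat{Q}^\dagger$. Alternatively, the same conclusion follows from a direct pinching determinant inequality, which one can verify for $2\times2$ blocks and extend, again showing that zeroing the off-diagonal entries of $\mat{A}$ cannot increase the objective.

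Once $\mat{A}$ is diagonal the objective separates across spatial modes. Substituting $\sigma_{\mathrm{DL}}^{-2}\mat{A}=\diag(\rho_{1,\mathrm{DL}},\dots,\rho_{S,\mathrm{DL}})$ into (\ref{eq:C}), each factor becomes $(\lambda_s+\rho_{\mathrm{UL}}^{-1})/(\rho_{\mathrm{UL}}^{-1}+\lambda_s/(1+\lambda_s\rho_{s,\mathrm{DL}}))$, which I would simplify using the identity $(1+\lambda_s\rho_{\mathrm{UL}})(1+\lambda_s\rho_{s,\mathrm{DL}})=1+\lambda_s(\rho_{\mathrm{UL}}+\rho_{s,\mathrm{DL}})+\lambda_s^2\rho_{\mathrm{UL}}\rho_{s,\mathrm{DL}}$ to recover the per-mode rate in (\ref{eq:SU_secret_key_capacity}). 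It then remains to allocate power. Each summand is increasing and concave in $p_{s,\mathrm{DL}}$ (its second derivative is negative), so maximizing $\sum_s C_s$ subject to $\sum_s p_{s,\mathrm{DL}}\le p_{\mathrm{DL}}$ is a convex program; the KKT stationarity condition equalizes the marginal rates across the active modes, and since the marginal rate is governed by the $\log(1+\lambda_s\rho_{s,\mathrm{DL}})$ contribution with effective noise level $\sigma_{\mathrm{DL}}^2/\lambda_s$, this yields the water-filling allocation $p_{s,\mathrm{DL}}=(\mu-\sigma_{\mathrm{DL}}^2/\lambda_s)^+$, with the water level $\mu$ fixed by the budget $\tr[\mat{P}_{\mathrm{DL}}]=p_{\mathrm{DL}}$.
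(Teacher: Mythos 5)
Since the paper relegates its own proof to an unpublished extended version (the in-text discussion after the theorem only gives the parallel-channel interpretation), your argument can only be judged on its own merits. Most of it holds up: the reduction to $\mat{A}=\mat{Q}^\dagger\mat{C}_{\mathrm{DL}}\mat{Q}$ with $\tr[\mat{A}]\le\tr[\mat{C}_{\mathrm{DL}}]$, the phase-averaging symmetrization, and the per-mode algebra leading to (\ref{eq:SU_secret_key_capacity}) are all correct. The convexity fact you flagged as the technical point is indeed true: writing $\log\left|\rho_{\mathrm{UL}}^{-1}\mat{I}+\mat{X}^{-1}\right|=\log\left|\mat{X}+\rho_{\mathrm{UL}}\mat{I}\right|-\log\left|\mat{X}\right|-S\log\rho_{\mathrm{UL}}$, the second derivative along a line $\mat{X}+t\mat{V}$ equals $\tr[(\mat{X}^{-1}\mat{V})^2]-\tr[((\mat{X}+\rho_{\mathrm{UL}}\mat{I})^{-1}\mat{V})^2]$, which is nonnegative because $\mat{X}^{-1}\succcurlyeq(\mat{X}+\rho_{\mathrm{UL}}\mat{I})^{-1}\succ\mat{0}$ (write $\mat{X}^{-1}=(\mat{X}+\rho_{\mathrm{UL}}\mat{I})^{-1}+\mat{D}$ with $\mat{D}\succcurlyeq\mat{0}$ and expand; the cross and quadratic terms are traces of products of positive semidefinite matrices). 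So Jensen applies and the optimal $\mat{A}$ may be taken diagonal.

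The genuine gap is in your last step. After separation, the per-mode objective is, up to a constant, $\log(1+\lambda_s\rho_{s,\mathrm{DL}})-\log\left(1+\lambda_s(\rho_{\mathrm{UL}}+\rho_{s,\mathrm{DL}})\right)$, \emph{not} the standard Gaussian rate $\log(1+\lambda_s\rho_{s,\mathrm{DL}})$, so your assertion that ``the marginal rate is governed by the $\log(1+\lambda_s\rho_{s,\mathrm{DL}})$ contribution'' is not a valid justification. The stationarity condition involves
\begin{equation*}
\frac{\partial C_s}{\partial p_{s,\mathrm{DL}}}=\frac{1}{\sigma_{\mathrm{DL}}^2}\left(\frac{\lambda_s}{1+\lambda_s\rho_{s,\mathrm{DL}}}-\frac{\lambda_s}{1+\lambda_s(\rho_{\mathrm{UL}}+\rho_{s,\mathrm{DL}})}\right),
\end{equation*}
and a priori the KKT system for such a difference of logs need not be solved by classical water-filling. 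It is, but only by an algebraic coincidence that a complete proof must verify: substituting $p_{s,\mathrm{DL}}=\mu-\sigma_{\mathrm{DL}}^2/\lambda_s$ gives $1+\lambda_s\rho_{s,\mathrm{DL}}=\lambda_s\mu/\sigma_{\mathrm{DL}}^2$ and $1+\lambda_s(\rho_{\mathrm{UL}}+\rho_{s,\mathrm{DL}})=\lambda_s(\mu/\sigma_{\mathrm{DL}}^2+\rho_{\mathrm{UL}})$, so the $\lambda_s$ factors cancel in \emph{both} terms and the derivative equals $1/\mu-1/(\mu+\sigma_{\mathrm{DL}}^2\rho_{\mathrm{UL}})$ on every active mode; for inactive modes, i.e., $\lambda_s\le\sigma_{\mathrm{DL}}^2/\mu$, the derivative at zero, $\lambda_s^2\rho_{\mathrm{UL}}/\left(\sigma_{\mathrm{DL}}^2(1+\lambda_s\rho_{\mathrm{UL}})\right)$, does not exceed this constant (both inequalities follow from $\lambda_s\mu\le\sigma_{\mathrm{DL}}^2$). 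Together with the per-mode concavity you established, this makes the KKT conditions sufficient and closes the argument. Without this verification, the water-filling allocation --- which is the actual content of the theorem --- remains unproven in your write-up.
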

\begin{proof}
	Due to space constraints, see extended version to appear soon.
\end{proof}
The optimal secret-key capacity in (\ref{eq:SU_secret_key_capacity}) shows the two main advantages of massive MIMO for secret-key generation: i) array gain $\lambda_s$ boosting the signal-to-noise ratio (SNR) and ii) spatial dimensionality gain $S$ through the use of multiple parallel spatial modes.

The theorem provides intuitive and practical insights for implementation. Consider that the BS uses the training matrix $\mat{S}_{\mathrm{DL}}^\dagger = \mat{P}_{\mathrm{DL}}^{1/2} \mat{Q}^\dagger$. Reciprocally, in UL, the BS applies the decoding matrix $\mat{Q}^\dagger$. This implies that ${\vect{z}}_{\mathrm{UL}}$ and ${\vect{z}}_{\mathrm{DL}}$ become
\begin{align}
\tilde{\vect{z}}_{\mathrm{UL}} &= p_{\mathrm{UL}}^{1/2}\mat{\Lambda}^{1/2}\vect{g}  + \tilde{\vect{w}}_{\mathrm{UL}}\label{eq:second_SU}\\
\tilde{\vect{z}}_{\mathrm{DL}} &=  \mat{P}_{\mathrm{DL}}^{1/2}\mat{\Lambda}^{1/2}\vect{g}+  \vect{w}_{\mathrm{DL}},\nonumber
\end{align}
where $\tilde{\vect{w}}_{\mathrm{UL}}=\mat{Q}^\dagger{\vect{w}}_{\mathrm{UL}}$ is ZMCSG additive noise of covariance $\sigma_{\mathrm{UL}}^2 \mat{I}_S$. Inspection of (\ref{eq:second_SU}) shows that the precoding/decoding operation at the BS has converted the initial problem into a set of $S$ parallel independent channels for secret-key generation. Vector equations in (\ref{eq:second_SU}) can be rewritten scalar-wise for $s=1,...,S$ as
\begin{align*}
\tilde{{z}}_{s,\mathrm{UL}} &= p_{\mathrm{UL}}^{1/2}{\lambda}^{1/2}_s{g}_s  + \tilde{{w}}_{s,\mathrm{UL}}\\
\tilde{{z}}_{s,\mathrm{DL}} &=  p_{s,\mathrm{DL}}^{1/2}{\lambda}^{1/2}_s{g}_s+  {w}_{s,\mathrm{DL}}.\nonumber
\end{align*}
The secret-key capacity is then given by the sum of the secret-key capacities of each independent channel. Optimizing over the power per spatial mode leads to the water filling solution of Theorem~\ref{theorem:proof_theorem_SU}. Thus, more power is allocated to stronger spatial modes while weaker modes might not be allocated any power leading to $T<S$ pilots\footnote{If spatial modes are inactive, the corresponding rows of $\mat{P}_{\mathrm{DL}}^{1/2}$ might be removed without impacting the capacity.}. As the DL SNR $p_{\mathrm{DL}}/\sigma_{\mathrm{DL}}^2$ grows large, the power becomes uniformly allocated to all spatial modes so that $S=T$. Somewhat surprisingly, the optimal power allocation does not depend on the UL SNR $\rho_{\mathrm{UL}}=p_{\mathrm{UL}}/\sigma_{\mathrm{UL}}^2$.

In practice, sending a total of $T=S$ pilots might be prohibitive, especially in rich scattering environments where $S$ might be close to $M$. In these situations, looking for the optimal training sequence with a reduced number of pilots might be attractive, even though penalizing the secret-key rate. 
\begin{corollary} \label{corollary:SU_pilot_constraint}
	If a constraint is added on the number of pilots $T\leq T_{\mathrm{max}}$ to the problem in (\ref{eq:max_capa_SU}), the optimal training design is again given by (\ref{eq:SU_opt_training}), except that only the $T_{\mathrm{max}}$ strongest spatial modes will be active, again according to a water filling allocation among active nodes.
\end{corollary}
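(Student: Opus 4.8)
The plan is to recast the pilot-number constraint as a rank constraint on the training covariance and then re-use the structural results of Theorem~\ref{theorem:proof_theorem_SU}. Since $\mat{C}_{\mathrm{DL}}=\mat{S}_{\mathrm{DL}}\mat{S}_{\mathrm{DL}}^\dagger$ with $\mat{S}_{\mathrm{DL}}\in\mathbb{C}^{M\times T}$, one has $\mathrm{rank}(\mat{C}_{\mathrm{DL}})\leq T$, and conversely any positive semidefinite $\mat{C}_{\mathrm{DL}}$ of rank $r$ is realizable with exactly $r$ pilots. Hence the constraint $T\leq T_{\mathrm{max}}$ is equivalent to appending $\mathrm{rank}(\mat{C}_{\mathrm{DL}})\leq T_{\mathrm{max}}$ to problem (\ref{eq:max_capa_SU}), and the whole argument will be carried out on this rank-constrained program.

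First I would reproduce the two structural reductions behind Theorem~\ref{theorem:proof_theorem_SU}, checking at each step that they are compatible with the rank constraint. Because $C$ in (\ref{eq:C}) depends on $\mat{C}_{\mathrm{DL}}$ only through $\mat{Q}^\dagger\mat{C}_{\mathrm{DL}}\mat{Q}$, any energy placed outside the range of $\mat{Q}$ is wasted, so one may write $\mat{C}_{\mathrm{DL}}=\mat{Q}\mat{B}\mat{Q}^\dagger$ with $\mat{B}\succcurlyeq\mat{0}$; since $\mat{Q}$ has orthonormal columns, $\tr[\mat{C}_{\mathrm{DL}}]=\tr[\mat{B}]$ and $\mathrm{rank}(\mat{C}_{\mathrm{DL}})=\mathrm{rank}(\mat{B})$. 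The objective then reduces to minimizing $|\rho_{\mathrm{UL}}^{-1}\mat{I}+(\mat{\Lambda}^{-1}+\sigma_{\mathrm{DL}}^{-2}\mat{B})^{-1}|$ with $\mat{\Lambda}$ diagonal. The key point is that the eigenvector-alignment step of Theorem~\ref{theorem:proof_theorem_SU}, which replaces $\mat{B}$ by a diagonal matrix carrying the \emph{same} eigenvalues aligned with those of $\mat{\Lambda}$, leaves $\mathrm{rank}(\mat{B})$ unchanged; it therefore remains valid under the rank constraint, and a diagonal $\mat{B}=\mat{P}_{\mathrm{DL}}$ is still optimal. The rank constraint thus becomes the requirement that at most $T_{\mathrm{max}}$ of the powers $p_{s,\mathrm{DL}}$ be nonzero, i.e. at most $T_{\mathrm{max}}$ active spatial modes in the parallel-channel form (\ref{eq:second_SU}).

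It then remains to optimise, over the choice of at most $T_{\mathrm{max}}$ active modes and their powers, the separable objective $\sum_s C_s(p_{s,\mathrm{DL}})$ with $C_s$ as in (\ref{eq:SU_secret_key_capacity}). Writing the numerator of the log-argument as $(1+\lambda_s\rho_{\mathrm{UL}})(1+\lambda_s\rho_{s,\mathrm{DL}})$ shows that $C_s$ is concave and increasing in $p_{s,\mathrm{DL}}$ and increasing in $\lambda_s$. An exchange argument then settles the mode-selection: if an optimal solution activated a mode $i$ while leaving idle a stronger mode $j$ with $\lambda_j>\lambda_i$, moving the power of $i$ onto $j$ cannot decrease the capacity (monotonicity in $\lambda_s$), and a subsequent re-optimisation of the powers over the new support, which still has cardinality at most $T_{\mathrm{max}}$, only increases it; hence the active set must be the $T_{\mathrm{max}}$ strongest modes (if the unconstrained water-filling of Theorem~\ref{theorem:proof_theorem_SU} already activates no more than $T_{\mathrm{max}}$ modes, the rank constraint is inactive and that design applies unchanged). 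Finally, concavity of each $C_s$ together with the budget $\tr[\mat{P}_{\mathrm{DL}}]\leq p_{\mathrm{DL}}$ yields, through the same Karush--Kuhn--Tucker conditions as in Theorem~\ref{theorem:proof_theorem_SU}, a water-filling allocation among these active modes, which is exactly the claimed design.

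The main obstacle is the second paragraph: one must ensure that the diagonalisation used to prove Theorem~\ref{theorem:proof_theorem_SU} is invoked in its rank-preserving form (rotating the eigenvectors of $\mat{B}$ while keeping its eigenvalue multiset) rather than by discarding off-diagonal entries, since the latter operation can raise the rank and thereby violate $T\leq T_{\mathrm{max}}$. A secondary but important check is the validity of the exchange step, namely the two monotonicities of $C_s$ (in $\lambda_s$ at fixed power, and concavity in the power) that guarantee both the strongest-mode selection and the feasibility of the re-optimisation.
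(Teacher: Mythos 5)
The paper itself gives no proof of this corollary (it is deferred to an extended version), so there is nothing in-paper to compare line by line; judged on its own, your proof is correct and follows exactly the route the paper's surrounding discussion implies (reduction to $S$ parallel channels via the eigenbasis of $\mat{R}$, then mode selection and water filling). You correctly identify and handle the one genuinely delicate point: the pilot constraint is a rank constraint on $\mat{C}_{\mathrm{DL}}$, so the diagonalization step must be done in its rank-preserving (eigenvalue-alignment/majorization) form rather than by taking the diagonal part of $\mat{B}$; your exchange argument and the KKT-based water filling over the retained modes (where the condition $\frac{\lambda_s^2\rho_{\mathrm{UL}}/\sigma_{\mathrm{DL}}^2}{(1+\lambda_s p_s/\sigma_{\mathrm{DL}}^2)(1+\lambda_s\rho_{\mathrm{UL}}+\lambda_s p_s/\sigma_{\mathrm{DL}}^2)}=\nu$ indeed collapses to a common water level in $p_s/\sigma_{\mathrm{DL}}^2+1/\lambda_s$, confirming the form $p_{s,\mathrm{DL}}=\left(\mu-\sigma_{\mathrm{DL}}^2/\lambda_s\right)^+$) complete the argument soundly.
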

\begin{proof}
	Due to space constraints, see extended version to appear soon.
\end{proof}
In the extreme case of a single pilot $T=1$, all the training power $p_{\mathrm{DL}}$ is allocated to the strongest spatial mode $\lambda_s$. In that case, the only gain of the massive MIMO array versus a single-antenna base station comes from the array gain.

As a final remark, one should note the convenience of the proposed training designs. Only the BS needs the knowledge of the channel statistics $\mat{R}$ (and thus $\mat{\Lambda}$ and $\mat{Q}$) to precode/decode pilot signals. To perform quantization, the user only needs to know the values of the products $\rho_{s,\mathrm{DL}}{\lambda}_s$ for each active spatial mode, which can be communicated using a public authenticated channel.

\section{Multiple-User Case}

The training design is in general more intricate in the multiple-user case. Indeed, in the DL, the training sequence $\mat{S}_{\mathrm{DL}}$ is received by all users and not only one. This implies that the training sequence must be optimized jointly considering the secret-key capacities of all users. Hence, the optimization criterion should be cleverly chosen. In general, no closed-form solution exists for the optimal training design. In the following sections, the general form of the solution is derived, it is shown how to find the optimal solution numerically and how a closed-form solution can be obtained in the large antenna case.

\subsection{General Form}
Let us define the matrix $\tilde{\mat{Q}}=\begin{pmatrix}
\mat{Q}_1,\hdots,\mat{Q}_K
\end{pmatrix}\in \mathbb{C}^{M\times \tilde{S}}$, where $\tilde{S}=\sum_{k=1}^{K}S_k$. Any training matrix $\mat{S}_{\mathrm{DL}}$ can be decomposed as
\begin{align*}
\mat{S}_{\mathrm{DL}}&= \mat{S}_{\mathrm{DL}}^{\parallel}+\mat{S}_{\mathrm{DL}}^{\perp},
\end{align*}
where the columns of $\mat{S}_{\mathrm{DL}}^{\parallel}$ and $\mat{S}_{\mathrm{DL}}^{\perp}$ belong to the column space of $\tilde{\mat{Q}}$ and its null space respectively. They can be respectively seen as the training power transmitted in the direction and in the nulls of all users.

\begin{proposition} \label{proposition:general_form}
	A necessary condition for an optimal training design is that $\|\mat{S}_{\mathrm{DL}}^{\perp}\|={0}$. Equivalently, it implies that an optimal training matrix $\mat{S}_{\mathrm{DL}}$ has to have the form $\mat{S}_{\mathrm{DL}}=\tilde{\mat{Q}}\mat{X}$, for some matrix $\mat{X}\in \mathbb{C}^{\tilde{S}\times T}$. 
\end{proposition}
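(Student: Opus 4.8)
The plan is to exploit an asymmetry between how the objective and the constraint depend on the training matrix: each user's capacity in (\ref{eq:C_k}) depends on $\mat{C}_{\mathrm{DL}}=\mat{S}_{\mathrm{DL}}\mat{S}_{\mathrm{DL}}^\dagger$ only through the projected quantity $\mat{Q}_k^\dagger \mat{C}_{\mathrm{DL}} \mat{Q}_k$, whereas the power constraint $\tr[\mat{C}_{\mathrm{DL}}]\le p_{\mathrm{DL}}$ sees the full matrix. Since every column of $\mat{Q}_k$ lies in the column space of $\tilde{\mat{Q}}$ while every column of $\mat{S}_{\mathrm{DL}}^{\perp}$ lies in its orthogonal complement, we have $\mat{Q}_k^\dagger \mat{S}_{\mathrm{DL}}^{\perp}=\mat{0}$ for all $k$. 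First I would substitute $\mat{S}_{\mathrm{DL}}=\mat{S}_{\mathrm{DL}}^{\parallel}+\mat{S}_{\mathrm{DL}}^{\perp}$ into $\mat{C}_{\mathrm{DL}}$ and expand, so that the cross terms and the $\mat{S}_{\mathrm{DL}}^{\perp}(\mat{S}_{\mathrm{DL}}^{\perp})^\dagger$ term are annihilated upon left/right multiplication by $\mat{Q}_k^\dagger$ and $\mat{Q}_k$. This yields $\mat{Q}_k^\dagger \mat{C}_{\mathrm{DL}} \mat{Q}_k = \mat{Q}_k^\dagger \mat{S}_{\mathrm{DL}}^{\parallel}(\mat{S}_{\mathrm{DL}}^{\parallel})^\dagger \mat{Q}_k$, showing that every $C_k$ is completely unaffected by $\mat{S}_{\mathrm{DL}}^{\perp}$.

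Next I would quantify the power cost of the perpendicular component. Using $\tr[\mat{C}_{\mathrm{DL}}]=\|\mat{S}_{\mathrm{DL}}\|^2$ and the orthogonality of the two column spaces, which forces $\tr[\mat{S}_{\mathrm{DL}}^{\parallel}(\mat{S}_{\mathrm{DL}}^{\perp})^\dagger]=0$, the budget splits additively as $\tr[\mat{C}_{\mathrm{DL}}]=\|\mat{S}_{\mathrm{DL}}^{\parallel}\|^2+\|\mat{S}_{\mathrm{DL}}^{\perp}\|^2$. Hence any power spent on $\mat{S}_{\mathrm{DL}}^{\perp}$ is pure waste: it contributes nothing to any capacity yet consumes part of $p_{\mathrm{DL}}$.

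The conclusion then follows by a scaling/exchange argument. Suppose an optimal design had $\|\mat{S}_{\mathrm{DL}}^{\perp}\|>0$. Replacing it by $\beta \mat{S}_{\mathrm{DL}}^{\parallel}$ with $\beta=(p_{\mathrm{DL}}/\|\mat{S}_{\mathrm{DL}}^{\parallel}\|^2)^{1/2}>1$ restores the power constraint with equality and scales $\mat{Q}_k^\dagger \mat{C}_{\mathrm{DL}} \mat{Q}_k$ by $\beta^2>1$. I would then verify that each $C_k$ in (\ref{eq:C_k}) is nondecreasing under this rescaling, because enlarging $\mat{Q}_k^\dagger \mat{C}_{\mathrm{DL}} \mat{Q}_k$ in the positive semidefinite order shrinks the inverse term $(\mat{\Lambda}_k^{-1}+\sigma^{-2}_{\mathrm{DL}}\mat{Q}_k^\dagger \mat{C}_{\mathrm{DL}} \mat{Q}_k)^{-1}$ and hence the determinant in the denominator, and strictly increasing for every user receiving nonzero power in its own subspace. (If $\mat{S}_{\mathrm{DL}}^{\parallel}=\mat{0}$, all capacities are zero and redirecting the freed budget into any user's subspace strictly improves matters.) Since this strictly raises at least one $C_k$ without lowering any other, it strictly improves every monotone criterion considered in the sequel, contradicting optimality. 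Therefore $\|\mat{S}_{\mathrm{DL}}^{\perp}\|=0$, i.e. $\mat{S}_{\mathrm{DL}}=\mat{S}_{\mathrm{DL}}^{\parallel}$, and since its columns lie in the column space of $\tilde{\mat{Q}}$ it admits the representation $\mat{S}_{\mathrm{DL}}=\tilde{\mat{Q}}\mat{X}$.

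The main obstacle is the final step: making the monotonicity of the determinant ratio rigorous through the positive semidefinite ordering and the operator monotonicity of the matrix inverse, confirming strictness, and cleanly covering the degenerate cases where a user's subspace receives no power, so that the phrasing \emph{necessary for every optimum} is justified for all the criteria treated later.
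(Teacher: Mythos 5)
Your proof is correct and follows essentially the same argument as the paper's: the perpendicular component $\mat{S}_{\mathrm{DL}}^{\perp}$ leaves every $C_k$ unchanged (since each capacity depends on $\mat{C}_{\mathrm{DL}}$ only through $\mat{Q}_k^\dagger \mat{C}_{\mathrm{DL}}\mat{Q}_k$) while consuming part of the power budget, and reallocating that power to amplify $\mat{S}_{\mathrm{DL}}^{\parallel}$ improves capacities, contradicting optimality. Your write-up is in fact more rigorous than the paper's brief sketch, since you make explicit the additive power split, the positive-semidefinite-order monotonicity of the determinant ratio, and the degenerate zero-power cases.
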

\begin{proof}
	Let us consider a training matrix $\mat{S}_{\mathrm{DL}}$ such that $\|\mat{S}_{\mathrm{DL}}^{\perp}\|\geq {0}$. Setting $\|\mat{S}_{\mathrm{DL}}^{\perp}\|$ to zero, does not affect the capacity of any user while it reduces the training power. This reduction of  power could be used, \textit{e.g.}, to amplify $\mat{S}_{\mathrm{DL}}^{\parallel}$ and improve the capacity of other users.
\end{proof}
Intuitively, this optimality condition implies that no training power is wasted in null directions.

\subsection{Concavity}

\begin{proposition} \label{proposition:concavity}
	 The user secret-key capacity $C_k$ given in $(\ref{eq:C_k})$ is concave in the training covariance matrix $\mat{C}_{\mathrm{DL}}\succcurlyeq \mat{0}$.
\end{proposition}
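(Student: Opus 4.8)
The plan is to reduce the statement to the convexity of a single scalar-valued matrix function and then verify that convexity by restricting to an arbitrary line and checking the sign of the second derivative. Writing $\rho:=\rho_{k,\mathrm{UL}}$ for brevity, I would first rewrite $C_k$ from (\ref{eq:C_k}) as
\begin{align*}
C_k = \log\left|\mat{\Lambda}_k + \rho^{-1}\mat{I}\right| - g(\mat{D}), \quad \mat{D} := \mat{\Lambda}_k^{-1} + \sigma_{\mathrm{DL}}^{-2}\mat{Q}_k^\dagger \mat{C}_{\mathrm{DL}}\mat{Q}_k,
\end{align*}
where $g(\mat{D}) := \log\left|\rho^{-1}\mat{I} + \mat{D}^{-1}\right|$. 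The first term does not depend on $\mat{C}_{\mathrm{DL}}$ and is an additive constant, while the map $\mat{C}_{\mathrm{DL}}\mapsto \mat{D}$ is affine and sends the cone $\mat{C}_{\mathrm{DL}}\succcurlyeq\mat{0}$ into the positive-definite cone, since $\mat{\Lambda}_k^{-1}\succ\mat{0}$. Because concavity is preserved under composition with an affine map and under subtraction from a constant, it suffices to show that $g$ is convex on $\{\mat{D}\succ\mat{0}\}$; then $C_k=\mathrm{const}-g(\mat{D})$ is concave in $\mat{D}$, hence in $\mat{C}_{\mathrm{DL}}$.

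To prove $g$ convex I would fix $\mat{D}\succ\mat{0}$ and an arbitrary Hermitian direction $\mat{V}$, set $\mat{D}_t=\mat{D}+t\mat{V}$, and study $\phi(t):=g(\mat{D}_t)$ for $t$ small enough that $\mat{D}_t\succ\mat{0}$; convexity along every such line is equivalent to convexity of $g$. Writing $\mat{N}_t:=\mat{D}_t^{-1}$ and $\mat{M}_t:=\rho^{-1}\mat{I}+\mat{N}_t$, I would invoke the standard identities $\dot{\mat{N}}_t=-\mat{N}_t\mat{V}\mat{N}_t$, $\ddot{\mat{N}}_t=2\mat{N}_t\mat{V}\mat{N}_t\mat{V}\mat{N}_t$, together with $\phi'(t)=\tr[\mat{M}_t^{-1}\dot{\mat{M}}_t]$ and $\phi''(t)=\tr[\mat{M}_t^{-1}\ddot{\mat{M}}_t]-\tr[\mat{M}_t^{-1}\dot{\mat{M}}_t\mat{M}_t^{-1}\dot{\mat{M}}_t]$. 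Since $\mat{M}_t\succ\mat{0}$ all inverses exist, and it is enough to check $\phi''(0)\geq 0$, re-centring afterwards at any interior point of the line.

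Evaluating at $t=0$ and writing $\mat{N}=\mat{D}^{-1}$, the two traces give
\begin{align*}
\phi''(0) = 2\tr\!\left[\mat{M}^{-1}\mat{N}\mat{V}\mat{N}\mat{V}\mat{N}\right] - \tr\!\left[\mat{M}^{-1}\mat{N}\mat{V}\mat{N}\mat{M}^{-1}\mat{N}\mat{V}\mat{N}\right].
\end{align*}
The crux is to bring this into a manifestly nonnegative form. I would symmetrise by introducing the Hermitian matrices $\mat{W}:=\mat{N}^{1/2}\mat{V}\mat{N}^{1/2}$ and $\mat{P}:=\mat{N}^{1/2}\mat{M}^{-1}\mat{N}^{1/2}$; a short computation shows $\mat{P}=(\mat{I}+\rho^{-1}\mat{D})^{-1}$, so $\mat{P}$ commutes with $\mat{D}$ and, crucially, satisfies $\mat{0}\prec\mat{P}\prec\mat{I}$ because $\rho^{-1}\mat{D}\succ\mat{0}$. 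Using the cyclic property of the trace the two terms then collapse to
\begin{align*}
\phi''(0) = 2\tr\!\left[\mat{P}\mat{W}^2\right] - \tr\!\left[\mat{P}\mat{W}\mat{P}\mat{W}\right] = \sum_{i,j}|W_{ij}|^2\, p_i\,(2-p_j),
\end{align*}
the last expression being written in the common eigenbasis of $\mat{P}$, whose eigenvalues satisfy $p_i\in(0,1)$. Each summand is nonnegative, so $\phi''(0)\geq 0$. The main obstacle is precisely this reduction: recognising the correct symmetrisation and, above all, exploiting the spectral bound $\mat{P}\prec\mat{I}$ (equivalently $2-p_j>0$). Without the $\prec\mat{I}$ bound the difference of the two traces would not be sign-definite; it is exactly this bound, stemming from the term $\rho^{-1}\mat{I}$ added to $\mat{D}^{-1}$ together with $\mat{\Lambda}_k^{-1}\succ\mat{0}$, that renders $g$ convex and hence $C_k$ concave.
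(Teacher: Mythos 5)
The paper itself gives no proof of Proposition~\ref{proposition:concavity} (it defers to an extended version), so there is nothing in-text to compare your argument against; judged on its own, your proof is correct and complete. The reduction is sound: the numerator of (\ref{eq:C_k}) is constant in $\mat{C}_{\mathrm{DL}}$, the map $\mat{C}_{\mathrm{DL}}\mapsto\mat{D}=\mat{\Lambda}_k^{-1}+\sigma_{\mathrm{DL}}^{-2}\mat{Q}_k^\dagger\mat{C}_{\mathrm{DL}}\mat{Q}_k$ is affine with image in the positive-definite cone, and concavity is preserved under affine precomposition, so everything rests on convexity of $g(\mat{D})=\log|\rho^{-1}\mat{I}+\mat{D}^{-1}|$. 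Your line-restriction computation checks out: the derivative identities for $\mat{N}_t=\mat{D}_t^{-1}$ and for $\log\det$ are the standard ones, the symmetrization $\mat{W}=\mat{N}^{1/2}\mat{V}\mat{N}^{1/2}$, $\mat{P}=\mat{N}^{1/2}\mat{M}^{-1}\mat{N}^{1/2}=(\mat{I}+\rho^{-1}\mat{D})^{-1}$ is correct (the last equality holds because $\mat{D}^{1/2}$ commutes with $(\mat{I}+\rho^{-1}\mat{D})^{-1}$), and the eigenbasis expansion $2\tr[\mat{P}\mat{W}^2]-\tr[\mat{P}\mat{W}\mat{P}\mat{W}]=\sum_{i,j}|W_{ij}|^2p_i(2-p_j)\geq 0$ with $p_i\in(0,1)$ is valid; your diagnosis that the spectral bound $\mat{P}\prec\mat{I}$ is the crux is also accurate (symmetrizing gives $\sum_{i,j}|W_{ij}|^2\bigl(1-(1-p_i)(1-p_j)\bigr)$, which can go negative once eigenvalues of $\mat{P}$ exceed $2$). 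Two remarks: first, the fact you establish along the way --- convexity of $\mat{X}\mapsto\log\det(\mat{A}+\mat{X}^{-1})$ on the positive-definite cone for $\mat{A}\succcurlyeq\mat{0}$ --- is a known result (it appears, e.g., as an exercise in Boyd and Vandenberghe's \emph{Convex Optimization}), so the proof could be shortened to the reduction plus a citation; second, when invoking "re-centring" to get $\phi''(t)\geq 0$ for all admissible $t$ rather than just $t=0$, it is worth saying explicitly that this works because your computation was carried out at an arbitrary point $\mat{D}\succ\mat{0}$.
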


\begin{proof}
	Due to space constraints, see extended version to appear soon.
%
\end{proof}

The concavity of $C_k$ implies that multiple problems can be solved efficiently using numerical optimization. For instance, one can efficiently maximize the sum secret-key capacity
\begin{align}
	\max_{\mat{C}_{\mathrm{DL}}\succcurlyeq \mat{0}} \ \sum_{k=1}^K C_k \quad \text{s.t. } \tr[\mat{C}_{\mathrm{DL}}]\leq p_{\mathrm{DL}}, \label{eq:max_sum_capa}
\end{align}
or maximize the minimal secret-key capacity
\begin{align}
\max_{\mat{C}_{\mathrm{DL}}\succcurlyeq \mat{0}} \ \min_k \ C_k \quad \text{s.t. }\tr[\mat{C}_{\mathrm{DL}}]\leq p_{\mathrm{DL}}. \label{eq:max_min_capa}
\end{align}
Note that one can easily give priorities to certain users by considering weighted versions of (\ref{eq:max_sum_capa}) and (\ref{eq:max_min_capa}), \textit{i.e.}, assigning priorities $\beta_k$ to each user so that $\sum_{k=1}^K\beta_k=1$ and $C_k$ is replaced by $C_k\beta_k$ in (\ref{eq:max_sum_capa}) and $C_k/\beta_k$ in (\ref{eq:max_min_capa}).


\subsection{Large Antenna Case}

As the number of BS antennas grows large, the spatial resolution of the BS increases. Hence, the BS becomes able to focus on and discriminate multipath components coming from different directions. As a result, considering an arbitrary pair of spatial modes corresponding to different users, they become orthogonal if coming from a different spatial direction. This assumption is now mathematically formalized.

$\mathbf{(As1)}$: the spatial modes of different users come from different spatial directions. This implies that, as $M/K \rightarrow \infty$, $\forall k,k',s,s'$, $|\vect{q}_{s,k}^\dagger \vect{q}_{s',k'}|\rightarrow 0$.

\begin{theorem} \label{theorem:large_antenna_case_1}
	Under $\mathbf{(As1)}$, the following training matrix, with a number of pilots limited to only $T=\max_k(S_k)$, has an optimal structure
	\begin{align}
	\mat{S}_{\mathrm{DL}}&=\sum_{k=1}^K \mat{Q}_k \begin{pmatrix}
	\mat{P}_{k,\mathrm{DL}}^{1/2}& \mat{0}_{T-S_k\times T}
	\end{pmatrix} \in \mathbb{C}^{M\times T}, \label{eq:opt_training_sequence_As1}
	\end{align}
	where $\mat{P}_{k,\mathrm{DL}}^{1/2}\in \mathbb{C}^{S_k \times S_k}$ is a diagonal matrix. The $s$-th diagonal element, denoted by $p_{s,k,\mathrm{DL}}$, corresponds to the power associated to the $s$-th spatial mode of user $k$. The resulting secret-key capacity for user $k$ is
	\begin{align}
	C_k 
	&= \sum_{s=1}^{S_k} \log \left(1+ \frac{\lambda_{s,k}^2\rho_{k,\mathrm{UL}}\rho_{s,k,\mathrm{DL}}}{\lambda_{s,k}(\rho_{k,\mathrm{UL}}+\rho_{s,k,\mathrm{DL}})+1 }\right), \label{eq:C_k_As1}
	\end{align}
	where $\rho_{s,k,\mathrm{DL}}=\frac{{p}_{s,k,\mathrm{DL}}}{\sigma_{\mathrm{DL}}^2}$. The sum secret-key capacity optimization (\ref{eq:max_sum_capa})
	is solved by using the training sequence (\ref{eq:opt_training_sequence_As1}) together with the water filling power allocation $p_{s,k,\mathrm{DL}}=\left(\mu -\sigma_{\mathrm{DL}}^2/{\lambda_{s,k}}\right)^+$, where $\mu$ is a positive constant that ensures that the power constraint is satisfied $\tr[\mat{C}_{\mathrm{DL}}]= \sum_{k=1}^K\tr[\mat{P}_{k,\mathrm{DL}}]= p_{\mathrm{DL}}$.
\end{theorem}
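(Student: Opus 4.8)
The plan is to exploit assumption $\mathbf{(As1)}$ to decouple the $K$-user sum-capacity problem into $K$ independent single-user problems that share only the total power budget, and then to invoke Theorem~\ref{theorem:proof_theorem_SU}. First I would record the geometric consequence of $\mathbf{(As1)}$: since each $\mat{Q}_k$ already has orthonormal columns and the cross-user blocks satisfy $\mat{Q}_k^\dagger\mat{Q}_{k'}\to\mat{0}$ for $k\neq k'$, the stacked matrix obeys $\tilde{\mat{Q}}^\dagger\tilde{\mat{Q}}\to\mat{I}_{\tilde S}$. By Proposition~\ref{proposition:general_form} an optimal training matrix may be taken of the form $\mat{S}_{\mathrm{DL}}=\tilde{\mat{Q}}\mat{X}$, so I would set $\mat{M}=\mat{X}\mat{X}^\dagger\succcurlyeq\mat{0}$ and partition it into blocks $\mat{M}_{kk'}\in\mathbb{C}^{S_k\times S_{k'}}$.

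Next I would substitute $\mat{C}_{\mathrm{DL}}=\tilde{\mat{Q}}\mat{M}\tilde{\mat{Q}}^\dagger$ into the capacity~(\ref{eq:C_k}). The only data-dependent quantity there is $\mat{Q}_k^\dagger\mat{C}_{\mathrm{DL}}\mat{Q}_k=(\mat{Q}_k^\dagger\tilde{\mat{Q}})\mat{M}(\tilde{\mat{Q}}^\dagger\mat{Q}_k)$, which under $\mathbf{(As1)}$ collapses to the single diagonal block $\mat{M}_{kk}$ because $\mat{Q}_k^\dagger\tilde{\mat{Q}}\to(\mat{0},\dots,\mat{I}_{S_k},\dots,\mat{0})$. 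Hence $C_k$ depends on $\mat{M}$ only through $\mat{M}_{kk}$. Simultaneously the power constraint reduces to $\tr[\mat{C}_{\mathrm{DL}}]\to\tr[\mat{M}]=\sum_k\tr[\mat{M}_{kk}]$, which sees only the diagonal blocks. The off-diagonal blocks thus affect neither $\sum_k C_k$ nor the budget, and the problem separates across users, coupled solely by $\sum_k\tr[\mat{M}_{kk}]\leq p_{\mathrm{DL}}$.

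I would then apply Theorem~\ref{theorem:proof_theorem_SU} block by block. For a fixed per-user power $\tr[\mat{M}_{kk}]$, that theorem shows the optimal block is diagonal, $\mat{M}_{kk}=\mat{P}_{k,\mathrm{DL}}$, and, since $\mat{\Lambda}_k$ and $\mat{M}_{kk}$ are then jointly diagonal, the log-determinants in~(\ref{eq:C_k}) split over modes and simplify to the per-user expression~(\ref{eq:C_k_As1}) by the same algebra as in the single-user proof. Because the objective is a sum of concave, increasing per-mode terms (cf.\ Proposition~\ref{proposition:concavity}) under the single budget $\sum_{k,s}p_{s,k,\mathrm{DL}}\leq p_{\mathrm{DL}}$, its KKT stationarity conditions couple all modes through one Lagrange multiplier, so the per-user water-filling levels of Theorem~\ref{theorem:proof_theorem_SU} merge into the common-level solution $p_{s,k,\mathrm{DL}}=(\mu-\sigma_{\mathrm{DL}}^2/\lambda_{s,k})^+$, with $\mu$ fixed by enforcing the power constraint with equality.

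Finally I would verify the pilot count, which I expect to be the crux. Reconstructing a diagonal $\mat{M}_{kk}=\mat{X}_k\mat{X}_k^\dagger=\mat{P}_{k,\mathrm{DL}}$ requires $\mat{X}_k\in\mathbb{C}^{S_k\times T}$ only with $T\geq S_k$, and the off-diagonal products $\mat{X}_k\mat{X}_{k'}^\dagger$ are unconstrained precisely because they are irrelevant under $\mathbf{(As1)}$. This freedom lets every user place its nonzero entries in the same leading columns: choosing $\mat{X}_k=(\mat{P}_{k,\mathrm{DL}}^{1/2},\mat{0})$ for all $k$ makes $T=\max_k S_k$ columns suffice and reconstructs exactly~(\ref{eq:opt_training_sequence_As1}), with $\mat{S}_{\mathrm{DL}}=\tilde{\mat{Q}}\mat{X}=\sum_k\mat{Q}_k(\mat{P}_{k,\mathrm{DL}}^{1/2},\mat{0})$. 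The main obstacle is the asymptotic bookkeeping underlying this reduction: all the ``$\to$'' relations hold only as $M/K\to\infty$, so I would state optimality as the limiting value and argue that the vanishing cross-Gram terms make both the user decoupling and the collapse of the pilot dimension from $\tilde S$ to $\max_k S_k$ exact in that regime.
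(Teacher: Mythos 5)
The paper never actually gives its proof of this theorem (it is deferred to an extended version that is not included here), so there is no reference argument to compare against; your proposal can only be judged on its own terms.

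On those terms, the structural core of your argument is correct and is surely the intended route: Proposition~\ref{proposition:general_form} restricts attention to $\mat{S}_{\mathrm{DL}}=\tilde{\mat{Q}}\mat{X}$; under $\mathbf{(As1)}$, $\tilde{\mat{Q}}^\dagger\tilde{\mat{Q}}\to\mat{I}_{\tilde{S}}$, so with $\mat{M}=\mat{X}\mat{X}^\dagger$ one has $\mat{Q}_k^\dagger\mat{C}_{\mathrm{DL}}\mat{Q}_k\to\mat{M}_{kk}$ and $\tr[\mat{C}_{\mathrm{DL}}]\to\sum_k\tr[\mat{M}_{kk}]$; hence the users decouple, Theorem~\ref{theorem:proof_theorem_SU} applied per user gives diagonal optimal blocks $\mat{M}_{kk}=\mat{P}_{k,\mathrm{DL}}$, the determinants factor over modes to yield (\ref{eq:C_k_As1}), and the irrelevance of the off-diagonal blocks of $\mat{X}\mat{X}^\dagger$ is exactly what lets all users reuse the same $T=\max_k S_k$ pilot columns via $\mat{X}_k=\begin{pmatrix}\mat{P}_{k,\mathrm{DL}}^{1/2}& \mat{0}\end{pmatrix}$. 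This correctly disposes of the pilot-count claim, which you rightly identified as the crux.

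The gap is the final water-filling step, which you assert (``the per-user water-filling levels \dots merge into the common-level solution'') rather than prove, and which in fact fails in general. The per-mode rate is $f(p)=\log(1+\lambda\rho_{\mathrm{UL}})+\log(1+\lambda p/\sigma_{\mathrm{DL}}^2)-\log(1+\lambda\rho_{\mathrm{UL}}+\lambda p/\sigma_{\mathrm{DL}}^2)$, not the textbook $\log(1+\lambda p/\sigma_{\mathrm{DL}}^2)$, so the stated water-filling form is not automatic. Its derivative is $f'(p)=b_k/\bigl((a+p)(a+p+b_k)\bigr)$ with $a=\sigma_{\mathrm{DL}}^2/\lambda_{s,k}$ and $b_k=\sigma_{\mathrm{DL}}^2\rho_{k,\mathrm{UL}}$, and the stationarity condition $f'(p)=\nu$ gives $a+p=\tfrac{1}{2}\bigl(-b_k+\sqrt{b_k^2+4b_k/\nu}\bigr)=:\mu_k$. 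Thus a single Lagrange multiplier $\nu$ produces \emph{per-user} water levels $\mu_k$ that depend on $\rho_{k,\mathrm{UL}}$; they collapse to the single level $\mu$ claimed in the theorem only if all users have equal UL SNR. (Sanity check: if $\rho_{2,\mathrm{UL}}\to 0$ then $C_2\to 0$ regardless of its DL power, so the sum capacity demands giving user $2$ nothing, whereas a common water level would allocate it the same power as an equally strong mode of a high-UL-SNR user.) A complete proof must carry out this KKT computation explicitly: it simultaneously establishes the exactness of Theorem~\ref{theorem:proof_theorem_SU}'s single-user water filling (there $b$ is common to all modes, so one $\mu$ emerges) and reveals that the multi-user claim needs either the implicit assumption $\rho_{k,\mathrm{UL}}=\rho_{\mathrm{UL}}$ for all $k$ or a restatement with levels $\mu_k$ coupled through the common $\nu$. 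Invoking Theorem~\ref{theorem:proof_theorem_SU} ``block by block'' cannot substitute for this, since that theorem's allocation presumes one common $\rho_{\mathrm{UL}}$ across all modes being water-filled.
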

\begin{proof}
	Due to space constraints, see extended version to appear soon.
\end{proof}

The theorem provides significant insights for large antenna systems. The assumption $\mathbf{(As1)}$ decouples and simplifies the training design as the problem becomes separable between users. Sending only $T=\max_k(S_k)$ pilots becomes sufficient to reach the capacity, implying that it does not scale with the number of users being served. Secondly, the theorem gives the structure of the optimal training sequence. Only the power associated to each spatial mode ${p}_{s,k,\mathrm{DL}}$ has to be found, depending on the criterion to maximize. In the case of the sum capacity, a water filling solution is found again.

Moreover, as in the single-user case, it might be useful to look for training designs with a reduced number of pilots. The following corollary shows that, under $\mathbf{(As1)}$, a constraint on the maximal number of pilots can be easily taken into account.
\begin{corollary} \label{corollary:MU_pilot_constraint}
	Under $\mathbf{(As1)}$, if a constraint is added on the number of pilots $T\leq T_{\mathrm{max}}$, the optimal training design is the same as in (\ref{eq:opt_training_sequence_As1}) and only the $T_{\mathrm{max}}$ strongest spatial modes of each user will be active. The sum secret-key capacity optimization (\ref{eq:max_sum_capa}) is then solved by a water filling solution over the active spatial modes.
\end{corollary}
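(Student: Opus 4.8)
The plan is to reduce the pilot-constrained problem to the decoupled per-mode problem already solved in Theorem~\ref{theorem:large_antenna_case_1}, to recast the pilot budget as a \emph{per-user} cardinality constraint, and to resolve the latter by a mode-swapping argument followed by water filling.

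First I would argue that, under $\mathbf{(As1)}$, the optimal structure remains (\ref{eq:opt_training_sequence_As1}) even with the pilot/rank constraint. The two ingredients behind Theorem~\ref{theorem:large_antenna_case_1} --- that no power should be wasted in the null space of $\tilde{\mat{Q}}$ (Proposition~\ref{proposition:general_form}), and that the cross-user correlations $\mat{Q}_k^\dagger\mat{Q}_{k'}\to\mat{0}$ for $k\ne k'$ --- hold for \emph{any} feasible $\mat{S}_{\mathrm{DL}}$, hence also for rank-limited ones. Restricting to (\ref{eq:opt_training_sequence_As1}), the objective decouples as $\sum_{k=1}^K\sum_{s=1}^{S_k} C_{s,k}$, where $C_{s,k}$ denotes the $s$-th summand of (\ref{eq:C_k_As1}), subject to $\sum_{k,s} p_{s,k,\mathrm{DL}}\le p_{\mathrm{DL}}$. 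Since $C_{s,k}$ is increasing and concave in $p_{s,k,\mathrm{DL}}$ and increasing in $\lambda_{s,k}$, a swap argument shows that any allocation funding a weak mode while a stronger mode of the same user is idle can be strictly improved by transferring that power; hence, ordering the eigenvalues decreasingly within each user, the active set of user $k$ is an initial segment $\{1,\dots,n_k\}$.

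The key step will be to convert the pilot budget into a per-user constraint. In (\ref{eq:opt_training_sequence_As1}) the $t$-th column of $\mat{S}_{\mathrm{DL}}$ equals $\sum_{k:\,S_k\ge t} p_{t,k,\mathrm{DL}}^{1/2}\vect{q}_{t,k}$, i.e. it carries the $t$-th mode of every user simultaneously, and under $\mathbf{(As1)}$ these columns are mutually orthogonal across $t$. The number of active pilots therefore equals the largest active index, namely $\max_k n_k$, so that $T\le T_{\mathrm{max}}$ is equivalent to $n_k\le T_{\mathrm{max}}$ for every $k$. I expect this to be the main obstacle, because it is precisely here that $\mathbf{(As1)}$ is decisive: the orthogonal stacking lets distinct users share pilot slots, so the budget caps the \emph{per-user} rather than the total number of active modes. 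This is what yields the ``no scaling with $K$'' conclusion and distinguishes the result from the single-user Corollary~\ref{corollary:SU_pilot_constraint}, where the budget instead limits the total number of active modes.

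Combining the two facts, each user activates only its $T_{\mathrm{max}}$ strongest modes. With the candidate active sets fixed, the objective is concave in the powers under a single linear constraint, and its KKT conditions return the water filling allocation $p_{s,k,\mathrm{DL}}=(\mu-\sigma_{\mathrm{DL}}^2/\lambda_{s,k})^+$ over the active modes, with $\mu$ set by $\sum_{k,s} p_{s,k,\mathrm{DL}}=p_{\mathrm{DL}}$, which completes the proof.
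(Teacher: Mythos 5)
The paper itself defers this proof to an extended version (``see extended version to appear soon''), so there is no reference argument to compare against; your proposal can only be judged on its own terms. On those terms, your route is sound and captures exactly the right idea: under $\mathbf{(As1)}$ the problem decouples per user, the stacked structure (\ref{eq:opt_training_sequence_As1}) lets distinct users ride on the \emph{same} pilot columns, so the budget $T\leq T_{\mathrm{max}}$ caps the per-user number of active modes $\max_k n_k$ rather than the total $\sum_k n_k$, and the remaining power allocation is a standard concave water-filling problem. This pilot-sharing observation is precisely what makes the corollary differ from the single-user Corollary~\ref{corollary:SU_pilot_constraint}, and your swap argument for the initial-segment structure of each user's active set is correct.

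One step deserves tightening: the claim that restricting to the form (\ref{eq:opt_training_sequence_As1}) is without loss of optimality under the pilot constraint is asserted rather than proved. Proposition~\ref{proposition:general_form} and the vanishing cross-user correlations do hold for any feasible matrix, but they are not the full content of Theorem~\ref{theorem:large_antenna_case_1}: the within-user alignment of the training with the eigenvectors of $\mat{R}_k$ (diagonality of $\mat{P}_{k,\mathrm{DL}}$) is a separate ingredient, and adding a constraint requires checking that the argument yielding that structure is compatible with it. The clean way to close this is a converse based on rank: for \emph{any} $M\times T$ training matrix with $T\leq T_{\mathrm{max}}$, user $k$'s capacity depends (asymptotically, under $\mathbf{(As1)}$) only on $\mat{M}_k=\mat{Q}_k^\dagger \mat{C}_{\mathrm{DL}}\mat{Q}_k$, and $\mathrm{rank}(\mat{M}_k)\leq \mathrm{rank}(\mat{C}_{\mathrm{DL}})\leq T_{\mathrm{max}}$; hence $C_k$ is upper bounded by the single-user rank-constrained optimum of Corollary~\ref{corollary:SU_pilot_constraint}, which is diagonal and supported on the $T_{\mathrm{max}}$ strongest modes of user $k$ (note that replacing $\mat{M}_k$ by its aligned diagonal version never increases its rank, so the alignment step is indeed constraint-compatible). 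Your stacking construction then achieves these per-user upper bounds for all users simultaneously with only $\max_k n_k\leq T_{\mathrm{max}}$ columns, and the shared power budget is resolved by the common water level, which completes the argument rigorously.
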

\begin{proof}
	Due to space constraints, see extended version to appear soon.
\end{proof}

As a final note, one can check that the single-user results of Theorem~\ref{theorem:proof_theorem_SU} and Corollary~\ref{corollary:SU_pilot_constraint} are found back as a particularization of Theorem~\ref{theorem:large_antenna_case_1} and Corollary~\ref{corollary:MU_pilot_constraint} to the case $K=1$. In that case, $\mathbf{(As1)}$ is trivially always verified.


\section{Simulation Results}

\begin{figure}[!t]
	\centering
	\resizebox{0.45\textwidth}{!}{%
		\Large
%
%
\begin{tikzpicture}

\begin{axis}[%
width=4.520833in,
height=3.565625in,
at={(0.758333in,0.48125in)},
scale only axis,
xmin=0,
xmax=20,
xlabel={Downlink SNR $\rho_{\mathrm{DL}}$ [dB]},
ymin=0,
ymax=30,
ylabel={Average secret-key capacity $\bar{C}$ [bits/training]},
title={$M=32$ BS antennas, Uplink SNR $\rho_{\mathrm{UL}}=10$ dB},
legend style={at={(0.03,0.97)},anchor=north west,legend cell align=left,align=left,draw=white!15!black}
]
\addplot [color=black,dashed,line width=1.5pt]
  table[row sep=crcr]{%
0	1.2764031716487\\
5	2.78480913845306\\
10	5.33033718502128\\
15	9.29018500440832\\
20	14.6850543946262\\
};
\addlegendentry{Uniform alloc.};

\addplot [color=black,dotted,line width=1.5pt]
  table[row sep=crcr]{%
0	3.88228472228069\\
5	5.94745160917685\\
10	8.61844294475646\\
15	12.4221509963221\\
20	17.1641870323472\\
};
\addlegendentry{Large antenna alloc. [Th. 2]};

\addplot [color=black,solid,line width=1.5pt]
  table[row sep=crcr]{%
0	4.31632531692653\\
5	6.40401456740039\\
10	9.08602714016107\\
15	12.8900070285179\\
20	17.6236770561086\\
};
\addlegendentry{Optimal alloc.};

\addplot [color=black,dashed,line width=1.5pt,forget plot]
  table[row sep=crcr]{%
0	1.57893168601656\\
5	3.67569251291566\\
10	7.67761603258746\\
15	14.5445697322249\\
20	24.024191296855\\
};
\addplot [color=black,solid,line width=1.5pt,forget plot]
  table[row sep=crcr]{%
0	4.73967655163156\\
5	7.59937347861168\\
10	11.9961311753365\\
15	18.8175045694379\\
20	27.2799162173267\\
};
\end{axis}
\node at (8.5,6.9) {Single-user};
\node at (8.5,6.3) {$K=1$};
\draw (10.5,6.3) circle [radius=0.8];

\node at (11.8,3.6) {Multiple-user};
\node at (12,3) {$K=2$};
\draw (12,5.4) circle [radius=0.7];

\end{tikzpicture}%
	}
		\vspace{-0.5em}
	\caption{Average secret-key capacity as a function of downlink SNR for different training sequences.}
	\label{fig:Capacity_SNR}
		\vspace{-1em}
\end{figure}
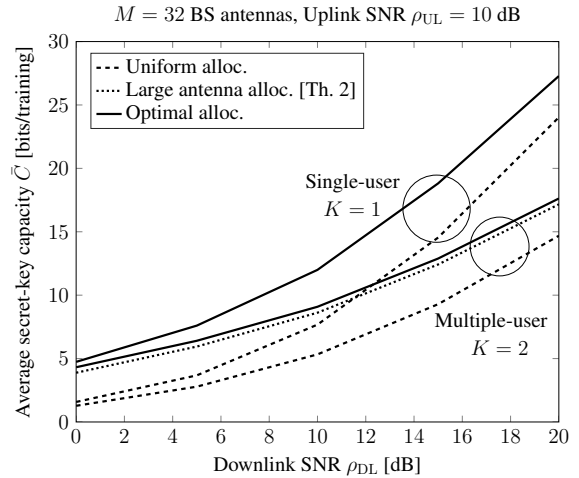

\begin{figure}[!t]
	\centering
	\resizebox{0.45\textwidth}{!}{%
		\Large
%
%
\definecolor{mycolor1}{rgb}{0.20810,0.16630,0.52920}%
\definecolor{mycolor2}{rgb}{0.21783,0.72504,0.61926}%
\definecolor{mycolor3}{rgb}{0.97630,0.98310,0.05380}%
\begin{tikzpicture}

\begin{axis}[%
width=4.520833in,
height=3.565625in,
at={(0.758333in,0.48125in)},
scale only axis,
area legend,
xmin=-5,
xmax=25,
xtick={ 0,  5, 10, 15, 20},
xlabel={Downlink SNR $\rho_{\mathrm{DL}}$ [dB]},
ymin=0,
ymax=14,
ylabel={Number of pilots $T$},
title={$M=32$ BS antennas, Uplink SNR $\rho_{\mathrm{UL}}=10$ dB},
legend style={at={(0.03,0.97)},anchor=north west,legend cell align=left,align=left,draw=white!15!black}
]
\addplot[ybar,bar width=0.133951in,bar shift=-0.167438in,draw=black,fill=mycolor1] plot table[row sep=crcr] {%
0	3\\
5	4\\
10	9\\
15	11\\
20	14\\
};
\addlegendentry{Optimal alloc. $K=1$};

\addplot[ybar,bar width=0.133951in,draw=black,fill=mycolor2] plot table[row sep=crcr] {%
0	2\\
5	3\\
10	7\\
15	11\\
20	14\\
};
\addlegendentry{Optimal alloc. $K=2$};

\addplot[ybar,bar width=0.133951in,bar shift=0.167438in,draw=black,fill=mycolor3] plot table[row sep=crcr] {%
0	2\\
5	4\\
10	7\\
15	11\\
20	14\\
};
\addlegendentry{Large antenna [Th. 2] $K=2$};

\end{axis}
\end{tikzpicture}%
	}
		\vspace{-0.5em}
	\caption{Number of pilots as a function of downlink SNR for different training sequences.}
	\label{fig:Pilot_number_SNR}
		\vspace{-1.5em}
\end{figure}
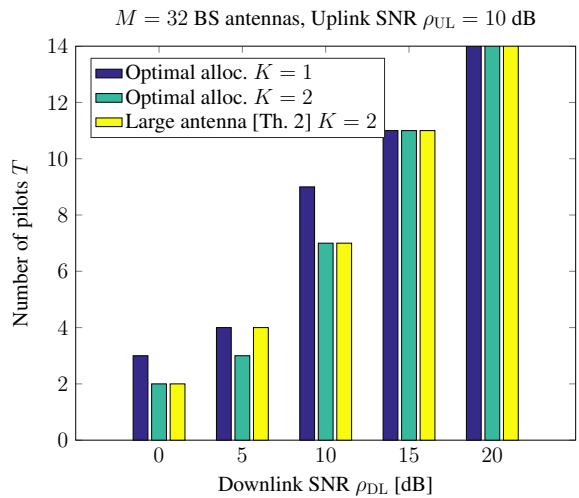

This sections aims at analyzing the theoretical results derived in previous sections for realistic channel propagation conditions. The carrier frequency is set to 3.5 GHz. A rectangular planar array of antennas at receive side is considered with an inter-antenna spacing of half a wavelength. The antenna elements have an isotropic pattern with ideal vertical polarization. The channel parameters were generated by QuaDRiGa \cite{Jaeckel2014} according to the 3D-UMa NLOS model defined by 3GPP TR 36.873 v12.5.0 specifications \cite{3GPP_TR_36_873v12_5_0}. Moreover, the channel parameters are generated for $K=2$ users locations randomly drawn in a radius of 200 meters around the BS; the BS height is 20m above ground. The same set of channel parameters was used for all simulations. 

In Fig.~\ref{fig:Capacity_SNR}, the average secret-key capacity, defined as
\begin{align*}
	\bar{C}=\frac{1}{K}\sum_{k=1}^KC_k,
\end{align*}
is plotted as a function of the downlink SNR $\rho_{\mathrm{DL}}$ for different training sequences. An uplink SNR $\rho_{\mathrm{UL}}=10$ dB is considered and a BS equipped with $M=32$ antennas placed in a rectangular fashion: $8\text{ Horiz.}\times 4\text{ Vert.}$. Both the single-user ($K=1$) and the multiple-user cases ($K=2$) are considered. The large antenna allocation corresponds to the training sequence given in (\ref{eq:opt_training_sequence_As1}), where the power allocation is found through the waterfilling algorithm. The uniform allocation corresponds also to the training sequence given in (\ref{eq:opt_training_sequence_As1}) but with uniform power allocated over the spatial modes. The optimal allocation corresponds to the design of Theorem~\ref{theorem:proof_theorem_SU} in the single-user case while, in the multiple-user case, the problem (\ref{eq:max_sum_capa}) is solved using CVX \cite{cvx} together with the MOSEK solver.

In the single-user case, as shown in Theorem~\ref{theorem:proof_theorem_SU}, the large antenna allocation becomes equivalent to (\ref{eq:SU_opt_training}) and is thus optimal, which is why they are mixed in Fig.~\ref{fig:Capacity_SNR}. On the other hand, the large antenna allocation is not optimal in the multiple-user case but the performance gap with the optimal design is rather small, given the large number of antennas. This motivates the use of this design in practice for large values of $M/K$. Finally, as expected, one can see that the performance loss of the uniform allocation reduces with the SNR. Indeed, as the SNR increases, the waterfilling solution results in a larger number of active modes and asymptotically equal power allocated to each mode. As a final note, the average secret-key capacity is smaller in the multiple-user case than in the single-user case. This can be intuitively expected as, for the same power budget, the BS has to accommodate a larger number of users. Of course, the overall sum secret-key capacity $K\bar{C}$ (not plotted in the figure) remains still larger than in the single-user case.

In Fig.~\ref{fig:Pilot_number_SNR}, for the same scenario as considered in Fig.~\ref{fig:Capacity_SNR}, the length of the training sequence, \textit{i.e.}, $T$, is plotted as a function of the downlink SNR. As expected from Theorem~\ref{theorem:large_antenna_case_1}, going from the single-user case ($K=1$) to the multiple-user case ($K=2$) does not result in an increased number of pilots. Interestingly, the number of pilots is even smaller at certain SNR values. All of this is a positive news as, in the end, the required number of pilots for an optimal training sequence is far from the number of BS antennas $M$ and does not scale with the number of users $K$.

%
%

\vspace{-1em}
\section{Conclusion} \label{section:conclusion}

This paper studied the optimization of the downlink training sequence for maximizing the secret-key capacity in a TDD massive MIMO scenario. Massive gains can be obtained by leveraging spatial dimensionality gains and array gains. Closed-form training designs and expressions of the secret-key capacity are provided, both in the single-user and multiple-user cases. A positive result is that, in the large antenna case, the pilot overhead does not scale with the number of served users. The optimization can take into account constraints on the number of pilots and various criteria such as the maximization of the minimal capacity and the sum capacity. User priorities can also be easily accommodated.

\vspace{-0.5em}
\scriptsize 
\bibliographystyle{IEEEtran}
\bibliography{IEEEabrv,IEEEreferences}

\end{document}